\documentclass[11pt]{amsart}

\usepackage{amsmath,amssymb,amsthm,mathtools}
\usepackage[numbers,sort&compress]{natbib}
\usepackage[colorlinks=true,linkcolor=blue,citecolor=blue,urlcolor=blue]{hyperref}
\usepackage[nameinlink,noabbrev]{cleveref}

\theoremstyle{plain}
\newtheorem{theorem}{Theorem}
\newtheorem{lemma}{Lemma}

\newtheorem{corollary}{Corollary}
\newtheorem{claim}{Claim}

\theoremstyle{definition}

\newtheorem{assumption}{Assumption}

\theoremstyle{remark}

\numberwithin{equation}{section}

\title[Two Moments for Risk-Monotone Additive Statistics]{Two Moments for Risk-Monotone Additive Statistics}
\author{Mark Whitmeyer}
\date{\today}
\subjclass[2020]{Primary 60A10; Secondary 60E07}
\thanks{\emph{Acknowledgements.} Dedicated to KS. I thank Joseph Whitmeyer for his comments. I used ChatGPT as one would an RA (checking proofs, identifying references, scanning for typos, etc.) and solicited feedback from \href{refine.ink}{refine.ink}.}

\begin{document}

\begin{abstract}
Every statistic on laws with finite \(p\)th moment that is additive across independent risks and monotone in mean-preserving spreads depends only on an additive function of the mean when \(p<2\), and only on such a function and a nonnegative multiple of the variance when \(p\geq2\).
\end{abstract}

\maketitle

Variance is additive across independent risks and monotone in mean-preserving spreads. We show that these properties leave room for variance exactly when the domain requires a second moment.

Fix \(p\in[1,+\infty)\). On centered laws with finite \(p\)th moment, we show that every real-valued statistic that is additive under convolution and monotone in convex order is zero when \(p<2\), and is a nonnegative multiple of variance when \(p\geq2\). In terms of certainty equivalents, expected value is the only possibility when \(p<2\), while every possibility when \(p\geq2\) is mean-variance.

\citet{pomattoStrackTamuz2020} characterize mean-variance certainty equivalents on laws with all finite moments, and \citet{fritzMuTamuz2026} ask whether convex-order monotonicity singles out variance already on the finite-second-moment domain. We answer that question and identify \(p=2\) as the exact threshold. More broadly, the paper belongs to the study of monotone additive statistics pursued by, e.g., \citet{ruzsaSzekely1988}, \citet{mattner2004}, and \citet{muPomattoStrackTamuz2024}. The proof combines the independent-background-risk comparison of \citet{pomattoStrackTamuz2020} with the observation that any failure of \(L^p\)-continuity at zero can be amplified to an arbitrary degree by summing independent small risks.

\section{Setup and main results}

Fix \(p\in[1,+\infty)\). Let \(\mathcal P^p\) denote the set of Borel probability measures \(\mu\) on \(\mathbf R\) such that \(\int_{\mathbf R}|x|^p\,\mu(dx)<+\infty\). Write \(m(\mu)\coloneqq\int_{\mathbf R}x\,\mu(dx)\) for the mean of \(\mu\), and let \(\mathcal P_0^p\coloneqq\left\{\mu\in\mathcal P^p\colon m(\mu)=0\right\}\). For \(q\in[1,+\infty)\), let \(L^q\) denote the random variables \(X\) such that \(\operatorname{E}[|X|^q]<+\infty\), and write \(\lVert X\rVert_q\coloneqq\left(\operatorname{E}[|X|^q]\right)^{1/q}\). For \(\mu\in\mathcal P^2\), write \(\operatorname{Var}(\mu)\coloneqq\int_{\mathbf R}(x-m(\mu))^2\,\mu(dx)\). For \(X\in L^2\), write \(\operatorname{Var}(X)\coloneqq\operatorname{E}\left[\left(X-\operatorname{E}[X]\right)^2\right]\).

Write \(\mu*\nu\) for the convolution of \(\mu\) and \(\nu\), and let \(\delta_x\) denote the point mass at \(x\). If \(X\) and \(Y\) are independent with laws \(\mu\) and \(\nu\), then \(X+Y\) has law \(\mu*\nu\). Throughout, we freely pass to product extensions of probability spaces, preserving the joint laws of previously introduced random variables, whenever we need to realize auxiliary random variables or independent copies.

For \(\mu,\nu\in\mathcal P^1\), write \(\mu\preceq_{\mathrm{cx}}\nu\) if
\[
\int_{\mathbf R}\varphi(x)\,\mu(dx)\leq\int_{\mathbf R}\varphi(x)\,\nu(dx),
\]
for every convex function \(\varphi\colon\mathbf R\to\mathbf R\) with at most linear growth.\footnote{This means that there is a constant \(K<+\infty\) such that \(|\varphi(x)|\leq K(1+|x|)\) for every \(x\in\mathbf R\). This restriction on growth ensures that both integrals are finite for every \(\mu,\nu\in\mathcal P^1\).} This order implies \(m(\mu)=m(\nu)\). When \(\mu,\nu\in\mathcal P^2\), it also implies \(\operatorname{Var}(\mu)\leq\operatorname{Var}(\nu)\). If \(X\) and \(Y\) are random variables, write \(X\preceq_{\mathrm{cx}}Y\) when \(\mathcal L(X)\preceq_{\mathrm{cx}}\mathcal L(Y)\), where \(\mathcal L(X)\) denotes the law of \(X\). A statistic on a class \(\mathcal A\) of probability laws is a map \(F\colon\mathcal A\to\mathbf R\). If \(X\) is a random variable with \(\mathcal L(X)\in\mathcal A\), write \(F(X)\coloneqq F(\mathcal L(X))\).

Call a real-valued map \(F\) \textit{additive under convolution} if \(F(\mu*\nu)=F(\mu)+F(\nu)\) whenever \(\mu\), \(\nu\), and \(\mu*\nu\) belong to its domain. Call \(F\) \textit{monotone in convex order} if \(F(\mu)\leq F(\nu)\) whenever \(\mu\preceq_{\mathrm{cx}}\nu\) and both laws belong to its domain.
\begin{assumption}\label{ass:additive-convex-order}
Fix \(p\in[1,+\infty)\) and a real-valued map \(R\colon\mathcal P_0^p\to\mathbf R\). Assume that \(R\) is additive under convolution and monotone in convex order.
\end{assumption}
The assumption says that \(R(X+Y)=R(X)+R(Y)\) whenever \(X\) and \(Y\) are independent centered random variables in \(L^p\), and that \(R(X)\leq R(Y)\) whenever \(X\preceq_{\mathrm{cx}}Y\).

\begin{theorem}\label{thm:second-moment-dichotomy}
Maintain \Cref{ass:additive-convex-order}. If \(1\leq p<2\), then \(R(\mu)=0\) for every \(\mu\in\mathcal P_0^p\). If \(2\leq p<+\infty\), then there is a constant \(c\geq0\) such that \(R(\mu)=c\operatorname{Var}(\mu)\) for every \(\mu\in\mathcal P_0^p\). Conversely, the zero map\footnote{\textit{Viz.}, the map \(\mu\mapsto0\).} satisfies \Cref{ass:additive-convex-order} for every \(p\in[1,+\infty)\), and, when \(p\geq2\), every map \(\mu\mapsto c\operatorname{Var}(\mu)\) with \(c\geq0\) satisfies it.\end{theorem}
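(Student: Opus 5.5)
The plan has four steps: establish basic properties of \(R\), prove continuity at \(\delta_0\) by an amplification argument, identify \(R\) on bounded laws via the independent-background-risk comparison of \citet{pomattoStrackTamuz2020}, and finish with a scaling argument that separates \(p<2\) from \(p\geq2\).

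\textbf{Basic properties.} The converse is routine. Variance is additive over independent centered summands. It is also monotone in convex order, because \(x^2\) is the increasing limit of convex functions of linear growth, so monotone convergence gives \(\operatorname{Var}(\mu)\leq\operatorname{Var}(\nu)\) whenever \(\mu\preceq_{\mathrm{cx}}\nu\). For the forward direction, additivity gives \(R(\delta_0)=0\). Since \(\delta_0\preceq_{\mathrm{cx}}\mu\) for every centered \(\mu\) (Jensen), we get \(R\geq0\). Moreover \(X+Y\succeq_{\mathrm{cx}}X\) for independent centered \(X,Y\), so \(R\) is monotone under adding independent centered noise.

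\textbf{Continuity at \(\delta_0\).} I will show that if \(\lVert X_k\rVert_p\to0\) with \(X_k\) centered, then \(R(X_k)\to0\). Suppose not. Then there are \(\varepsilon>0\) and centered \(X_k\) with \(\lVert X_k\rVert_p\leq 2^{-k}\) (or some summable rate) and \(R(X_k)\geq\varepsilon\). Take them independent and set \(S_n=\sum_{k\leq n}X_k\). By additivity \(R(S_n)\geq n\varepsilon\). The series \(S=\sum_k X_k\) converges in \(L^p\) and almost surely, and the martingale property gives \(S_n=\operatorname{E}[S\mid X_1,\dots,X_n]\). Hence \(S_n\preceq_{\mathrm{cx}}S\), and \(R(S)\geq n\varepsilon\) for every \(n\), which contradicts finiteness of \(R(S)\). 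The same amplification argument should give a more quantitative statement: \(R(X)\to0\) as \(\lVert X\rVert_p\to0\).

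\textbf{Identification on bounded laws.} On bounded centered laws, which have all moments, I would invoke the characterization of \citet{pomattoStrackTamuz2020}, or reproduce its background-risk argument directly. For convex-order monotone additive statistics this gives \(R(\mu)=c\operatorname{Var}(\mu)\) with \(c\geq0\). The key mechanism is that a large independent Gaussian-like background risk makes convex order comparable up to second order, so additivity forces dependence on variance alone. Extending to all of \(\mathcal P_0^p\) then works by truncation. Let \(X^{(M)}=\operatorname{E}[X\mid X\wedge M\vee -M]\), or more simply the centered truncation obtained as a conditional expectation. This gives \(X^{(M)}\preceq_{\mathrm{cx}}X\) and \(X^{(M)}\to X\) in \(L^p\). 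Write \(X\) as the sum of the conditional-expectation part and a martingale difference. Continuity, via a decomposition \(X\stackrel{d}{=}\) (truncated part) plus an independent small piece, then passes the identity to the limit. Making this decomposition independent is delicate. I would instead bound \(R(X)\) from above by \(R\) of a bounded law plus an independent small-\(L^p\) law that dominates \(X\) in convex order.

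\textbf{The threshold and the main obstacle.} Consider \(p<2\). Choose a centered \(Y\) with \(\operatorname{Var}(Y)=1\) but \(\lVert Y\rVert_p\) tiny, for example a rare large jump, which is possible since \(p<2\). Let \(Y\) be bounded, so that \(R(Y)=c\). Continuity at \(\delta_0\) then forces \(c=0\), and monotonicity under truncation plus \(R\geq0\) yields \(R\equiv0\). For \(p\geq2\), \(L^p\)-smallness implies small variance, which is consistent with \(c\operatorname{Var}\), and the extension gives the result. I expect the main obstacle to be the extension from bounded laws to \(\mathcal P_0^p\): a convex-order upper bound by an independent sum of a bounded law and an \(L^p\)-small law is needed. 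I would construct it via a Skorokhod or martingale embedding, splitting \(X\) into its bounded conditional-expectation part and tails, and dominating the tails by an independent symmetric copy.
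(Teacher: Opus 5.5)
Your continuity-at-$\delta_0$ argument is correct; it is the amplification behind \Cref{lem:stacking}, applied directly to $R\geq0$. Your way of forcing $c=0$ for $p<2$ is also valid, and simpler than \Cref{lem:subquadratic-calibration}. The law $\mu_a$ with mass $1/(2a^2)$ at each of $\pm a$ and the rest at $0$ is bounded and centered, has variance $1$, and has $\int|x|^p\,d\mu_a=a^{p-2}\to0$, so $c=R(\mu_a)\to0$.

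The gap is exactly the step you flag, and it is the heart of the extension: nothing in the proposal bounds $R(X)$ from above for unbounded $X$. The truncations $X^{(M)}\preceq_{\mathrm{cx}}X$ only give $R(X)\geq c\operatorname{Var}(X^{(M)})$. Continuity at $\delta_0$ cannot carry the bounded identity to the limit, because $R$ is not known to be continuous at $X$ and $X-X^{(M)}$ is not independent of $X^{(M)}$. For the same reason, your $p<2$ conclusion that ``monotonicity under truncation plus $R\geq0$ yields $R\equiv0$'' does not follow. Both ingredients are lower bounds, and $R(X)\leq0$ for unbounded $X$ is precisely what is missing.

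The construction you sketch (bounded part $B_M=\operatorname{E}[X\mid\mathcal G_M]$ plus an independent copy of the tail piece) does not give the needed bound. Independent noise acts in every state, while the tail deviation lives only on the rare event $A_M^c=\{|X|>M\}$.
An independent $T$ with the conditional law of $X-b_M$ given $A_M^c$ does give $X\preceq_{\mathrm{cx}}B_M+T$. But for symmetric $X$ one has $|T|>M$, so $T$ is not small.
The thinned copy $S$, equal to $0$ with probability $\alpha_M=\operatorname{P}(A_M)$ and distributed as $T$ otherwise, is small but too weak. For symmetric $X$, $\operatorname{E}|B_M+S|=\alpha_M^2\operatorname{E}[|X|\mid A_M]+(1-\alpha_M)\operatorname{E}[|X|\mid A_M^c]<\operatorname{E}|X|$ for large $M$.

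The missing idea, \Cref{lem:tail-compression}, reverses the roles. Keep the tails and collapse the moderate values: $Y_M=m_M\mathbf 1_{A_M}+X\mathbf 1_{A_M^c}$ with $m_M=\operatorname{E}[X\mid A_M]$, which is centered with $\lVert Y_M\rVert_p\to0$. Let the independent \emph{bounded} piece $W_M$ have the law of $X-m_M$ given $A_M$. On $A_M$ the sum $Y_M+W_M$ reproduces the conditional law of $X$, and on $A_M^c$ Jensen applies, so $X\preceq_{\mathrm{cx}}Y_M+W_M$. Hence $R(X)\leq R(Y_M)+c\operatorname{Var}(X\mid A_M)$.

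With this inequality your plan closes, somewhat more directly than the paper's route through $D$ and \Cref{lem:stacking}. Your continuity lemma gives $R(Y_M)\to0$. Also $\operatorname{Var}(X\mid A_M)\to\operatorname{Var}(X)$ when $p\geq2$, and $c=0$ when $p<2$.
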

\Cref{thm:second-moment-dichotomy} concerns only laws with mean zero. For an arbitrary random variable \(X\), write \(X=\left(X-\operatorname{E}[X]\right)+\operatorname{E}[X]\). The first term is centered (mean zero), and the second is constant. Through this lens, we see that a statistic that is additive under convolution is determined by its values on centered (mean  laws and its values on point masses. The values on point masses may be any additive function of their locations, because the convex order compares only laws with the same mean.

\begin{corollary}\label{cor:full-classification}
Fix \(p\in[1,+\infty)\). Suppose that \(\Phi\colon\mathcal P^p\to\mathbf R\) is real-valued and additive under convolution, and that \(\Phi(\mu)\leq\Phi(\nu)\) whenever \(\mu\preceq_{\mathrm{cx}}\nu\). There is an additive function\footnote{That is, a function \(a\colon\mathbf R\to\mathbf R\) satisfying \(a(x+y)=a(x)+a(y)\) for every \(x,y\in\mathbf R\).} \(a\colon\mathbf R\to\mathbf R\) such that
\[
\Phi(\mu)=
\begin{cases}
a(m(\mu)), \quad &\text{if} \quad 1\leq p<2,\\
a(m(\mu))+c\operatorname{Var}(\mu),\quad &\text{if} \quad 2\leq p<+\infty,
\end{cases}
\]
where \(c\geq0\) in the second case. Conversely, every map of the displayed form is additive under convolution and monotone in convex order.
\end{corollary}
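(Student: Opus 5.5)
We reduce the corollary to \Cref{thm:second-moment-dichotomy} by separating the mean from the centered part. Given \(\Phi\) as in the statement, define \(a(x)\coloneqq\Phi(\delta_x)\) for \(x\in\mathbf R\), and let \(R\) be the restriction of \(\Phi\) to \(\mathcal P_0^p\). Since \(\delta_x*\delta_y=\delta_{x+y}\) and every point mass lies in \(\mathcal P^p\), additivity of \(\Phi\) under convolution gives \(a(x+y)=a(x)+a(y)\), so \(a\) is additive. The restriction \(R\) inherits additivity under convolution: \(\mathcal P_0^p\) is closed under convolution, because if \(X,Y\in L^p\) are independent and centered, then \(X+Y\in L^p\) by Minkowski and has mean zero. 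It also inherits monotonicity in convex order directly. Hence \(R\) satisfies \Cref{ass:additive-convex-order}, and \Cref{thm:second-moment-dichotomy} yields \(R\equiv0\) when \(p<2\), and \(R=c\operatorname{Var}\) with \(c\geq0\) when \(p\geq2\).

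For a general \(\mu\in\mathcal P^p\), let \(\mu_0\) be the law of \(X-\operatorname{E}[X]\) when \(X\sim\mu\). Then \(\mu_0\in\mathcal P_0^p\), since \(|x-m|^p\leq2^{p-1}(|x|^p+|m|^p)\), and \(\mu=\mu_0*\delta_{m(\mu)}\). All three laws lie in \(\mathcal P^p\), so additivity gives \(\Phi(\mu)=R(\mu_0)+a(m(\mu))\). Substituting the two cases, and using \(\operatorname{Var}(\mu_0)=\operatorname{Var}(\mu)\), gives the displayed formula.

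For the converse, take \(\Phi(\mu)=a(m(\mu))+c\operatorname{Var}(\mu)\), with \(c=0\) allowed in either case and \(c=0\) forced when \(p<2\). Additivity follows because means add under convolution, \(a\) is additive, and variances of independent \(L^2\) variables add. This last point uses only \(p\geq2\) whenever \(c\neq0\). For monotonicity, \(\mu\preceq_{\mathrm{cx}}\nu\) forces \(m(\mu)=m(\nu)\), so the \(a\)-terms coincide. When \(p\geq2\), the order also gives \(\operatorname{Var}(\mu)\leq\operatorname{Var}(\nu)\), as noted in the setup, and \(c\geq0\) then gives \(\Phi(\mu)\leq\Phi(\nu)\).

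There is no real obstacle beyond invoking \Cref{thm:second-moment-dichotomy}. The one point needing care is checking that every convolution used stays in the domain, namely \(\delta_x*\delta_y\), \(\mu_0*\delta_{m(\mu)}\), and sums of independent centered laws. This is needed because additivity of \(\Phi\) is only assumed when all three laws belong to \(\mathcal P^p\).
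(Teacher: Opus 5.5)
Your proposal is correct and matches the paper's proof step for step: define \(a(x)=\Phi(\delta_x)\), apply \Cref{thm:second-moment-dichotomy} to the restriction of \(\Phi\) to \(\mathcal P_0^p\), write \(\mu\) as its centered version convolved with \(\delta_{m(\mu)}\), and get the converse from mean preservation and variance monotonicity under convex order. Your extra domain checks are sound, though closure of \(\mathcal P_0^p\) under convolution is not needed, since additivity is only required when \(\mu\), \(\nu\), and \(\mu*\nu\) all lie in the domain.
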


In the next corollary we impose \(C(\delta_x)=x\), which makes it so that every certain amount \(x\) is assigned the value \(x\). Apply \Cref{cor:full-classification} to \(-C\). Since \(\delta_x\) has mean \(x\) and variance zero, the normalization requires the additive function in that classification to satisfy \(a(x)=-x\). Multiplying the resulting formula by \(-1\) leaves \(m(\mu)\) as the mean term in \(C\) and places a minus sign before the variance term.

\begin{corollary}\label{cor:certainty-equivalent}
Fix \(p\in[1,+\infty)\). Suppose that \(C\colon\mathcal P^p\to\mathbf R\) is real-valued, \(C(\delta_x)=x\) for every \(x\in\mathbf R\), and \(C(\mu*\nu)=C(\mu)+C(\nu)\) for all \(\mu,\nu\in\mathcal P^p\). Suppose also that \(C(\mu)\geq C(\nu)\) whenever \(\mu\preceq_{\mathrm{cx}}\nu\). If \(1\leq p<2\), then \(C(\mu)=m(\mu)\) for every \(\mu\in\mathcal P^p\). If \(2\leq p<+\infty\), then there is a constant \(c\geq0\) such that \(C(\mu)=m(\mu)-c\operatorname{Var}(\mu)\) for every \(\mu\in\mathcal P^p\). Conversely, expected value satisfies the hypotheses for every \(p\in[1,+\infty)\), and, when \(p\geq2\), every map \(\mu\mapsto m(\mu)-c\operatorname{Var}(\mu)\) with \(c\geq0\) satisfies them.
\end{corollary}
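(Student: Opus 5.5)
The proof reduces to \Cref{cor:full-classification} applied to \(\Phi\coloneqq-C\). That corollary takes \Cref{thm:second-moment-dichotomy} as given.

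First I check the hypotheses. Since \(C\) is real-valued on \(\mathcal P^p\) and additive for all \(\mu,\nu\in\mathcal P^p\), so is \(-C\). Note that \(\mathcal P^p\) is closed under convolution, because \(|x+y|^p\le 2^{p-1}(|x|^p+|y|^p)\), so additivity holds wherever required. If \(\mu\preceq_{\mathrm{cx}}\nu\), then \(C(\mu)\ge C(\nu)\), which gives \(-C(\mu)\le -C(\nu)\); hence \(\Phi\) is monotone in convex order.

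\Cref{cor:full-classification} then supplies an additive \(a\colon\mathbf R\to\mathbf R\), together with \(c\ge0\) when \(p\ge2\), such that \(-C(\mu)=a(m(\mu))\) or \(-C(\mu)=a(m(\mu))+c\operatorname{Var}(\mu)\). Next I evaluate at \(\delta_x\), which lies in \(\mathcal P^p\) with \(m(\delta_x)=x\) and \(\operatorname{Var}(\delta_x)=0\). This gives \(-x=-C(\delta_x)=a(x)\) for every \(x\), so \(a(x)=-x\). Substituting and multiplying by \(-1\) yields \(C(\mu)=m(\mu)\) for \(p<2\) and \(C(\mu)=m(\mu)-c\operatorname{Var}(\mu)\) for \(p\ge2\).

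For the converse, take \(C(\mu)=m(\mu)-c\operatorname{Var}(\mu)\), with \(c=0\) allowed and \(c=0\) forced when \(p<2\). The normalization holds because \(C(\delta_x)=x\). Additivity holds because means add under convolution, and variances of independent \(L^2\) variables add when \(p\ge2\). Monotonicity holds because \(\mu\preceq_{\mathrm{cx}}\nu\) forces \(m(\mu)=m(\nu)\) and, in \(\mathcal P^2\), \(\operatorname{Var}(\mu)\le\operatorname{Var}(\nu)\); with \(c\ge0\) this gives \(C(\mu)\ge C(\nu)\). Alternatively, this direction follows from the converse part of \Cref{cor:full-classification} with \(a(x)=x\) applied to \(-C\).

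This step is bookkeeping once \Cref{cor:full-classification} is available, so I expect no real obstacle here. The only point to watch is that the sign flip reverses the monotonicity direction correctly.
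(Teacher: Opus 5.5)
Your proof is correct and is essentially the paper's argument. The paper applies \Cref{cor:full-classification} to the risk premium \(\rho(\mu)=m(\mu)-C(\mu)\), where \(\rho(\delta_x)=0\) removes the additive term directly; you apply it to \(-C\) and read off \(a(x)=-x\), which is the route the paper sketches just before stating the corollary. One small sign slip: in your alternative converse remark, \(-C=-m+c\operatorname{Var}\) corresponds to \(a(x)=-x\), not \(a(x)=x\).
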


\section{Proof Intuition}

Quite simply, the distinction made by \Cref{thm:second-moment-dichotomy} between \(p<2\) and \(p\geq2\) arises because every risk in the domain has finite variance only in the latter case. In the proof, we first pin down the statistic \(R\) on bounded centered risks before then showing that unbounded risks engender no additional possibilities. We say in this sketch that one risk is less risky than another when it is smaller in convex order.

\emph{Monotonicity in variance via background risks.} Variance alone does not generally make two risks comparable in convex order. Nevertheless, additivity under convolution plus monotonicity allow us to connect the two. The first step in doing so is to take \citet[Theorem 2]{pomattoStrackTamuz2020} off-the-shelf to produce our immediate corollary  \Cref{lem:background-risk-comparison}, which says that if one bounded centered risk has smaller variance than another, then we can always find a common independent background that makes the first less risky. We then use this fact, then cancel out the common background risk by additivity, to show in \Cref{lem:bounded-calibration} that, among bounded centered risks, a smaller variance cannot be assigned a larger value by \(R\).

\emph{Variance encompasses all for bounded risks.} Furthermore, in \Cref{lem:bounded-calibration}, we go on to argue that two bounded centered risks with the same variance must actually have the same \(R\)-value. The proof is simple: if not, taking many independent copies would multiply the difference between their \(R\)-values (again via additivity). Adding one fixed risk to one of the sums would make its variance strictly larger, but the resulting comparison would eventually contradict the multiplied difference in values. Accordingly, \(R\) can depend only on variance (on the bounded domain). Moreover, independent addition makes this dependence additive, and we have already deduced that \(R\)'s dependence on variance must be nondecreasing. These two properties force linearity: there is a constant \(c\geq0\) such that \(R(X)=c\operatorname{Var}(X)\) for every bounded centered \(X\).

\emph{A bound for unbounded risks.} Suppose that \(p\geq2\), so every risk under consideration has finite variance, and define \(D(X)\coloneqq R(X)-c\operatorname{Var}(X)\), which measures how far an unbounded risk lies above or below the formula we just established for bounded risks. In \Cref{lem:residual-nonnegative}, we discover that it cannot lie below that formula. In the proof of this lemma, we replace the most extreme outcomes of an unbounded risk by their conditional mean. This produces bounded risks that are less risky than the original risk and whose variances approach its variance. Since the values of the bounded risks are already known, we conclude by monotonicity that \(D(X)\geq0\).

\emph{The opposite bound for unbounded risks I.}
It remains to show that \(D(X)\) cannot be strictly positive. The first of two steps in this endeavor is \Cref{lem:tail-compression}, which shows that any such extra value can be carried, up to an arbitrarily small loss, by a risk with arbitrarily small \(L^p\)-norm. We prove this via a construction that replaces the common, moderate outcomes of a risk by their average while retaining its rare, extreme outcomes. As we raise the threshold separating moderate outcomes from extreme ones, the retained extreme outcomes become increasingly rare, while the average assigned to the remaining outcomes approaches zero. As a result, the risk becomes arbitrarily small. To compare the original risk with the compressed one, we then add back an independent bounded risk that reproduces the variation removed from the moderate outcomes. \Cref{lem:bounded-calibration} tells us exactly how much this added risk contributes to \(R\). When \(p\geq2\), the compression also reduces variance by almost the same amount. Thus, after subtracting \(c\operatorname{Var}\), the two effects nearly cancel, and the compressed risk retains almost all of \(D(X)\). When \(c=0\), the added bounded risk contributes nothing to \(R\), so the compressed risk does not reduce \(R(X)\).

\emph{The opposite bound for unbounded risks II.}
Next, in \Cref{lem:stacking}, we show that the value preserved by \Cref{lem:tail-compression} must be zero. If a strictly positive value could be carried by arbitrarily small risks, we could choose infinitely many independent such risks, each producing nearly the same value, but small enough that their sum still belongs to \(L^p\). Additivity would make the value of the first \(N\) risks grow without bound. This value is always no greater than the \(R\)-value of that finite sum. Moreover, each finite sum is also less risky than the full sum, because completing the sum only adds further independent centered risks. Monotonicity would, therefore, force the full sum to have an arbitrarily large \(R\)-value, contradicting the fact that \(R\) is real-valued.

\emph{Eliminating \(c\) for low \(p\).}
It remains to determine whether \(c\) can be positive when \(p<2\). In \Cref{lem:subquadratic-calibration}, we find that it cannot. In this range,
the domain contains a centered risk with finite \(p\)th moment but infinite variance. Its bounded truncations are less risky than the original risk, while their variances grow without bound. If \(c\) were strictly positive, then \Cref{lem:bounded-calibration} would assign these truncations arbitrarily large values. Monotonicity would then require the original risk to have an arbitrarily large \(R\)-value, contradicting the fact that \(R\) is real-valued. That leaves \(c=0\) as the only possibility.

The pieces now fit together. When \(p\geq2\), \Cref{lem:residual-nonnegative} makes \(D\) nonnegative; \Cref{lem:tail-compression} preserves \(D\), up to an arbitrarily small loss, on arbitrarily small risks; and \Cref{lem:stacking} forces \(D=0\). Therefore, \(R=c\operatorname{Var}\). When \(p<2\), \Cref{lem:subquadratic-calibration} first implies \(c=0\). Convex-order monotonicity then makes \(R\) nonnegative, while \Cref{lem:tail-compression} preserves \(R\) on arbitrarily small risks. Applying \Cref{lem:stacking} directly to \(R\) yields \(R=0\).

\section{Proof of the dichotomy}

A random variable \(X\) is centered if \(\operatorname{E}[X]=0\), and bounded if there is a constant \(M<+\infty\) such that \(|X|\leq M\) almost surely. We use the following consequence of \citet[Theorem~2]{pomattoStrackTamuz2020}.

\begin{lemma}\label{lem:background-risk-comparison}
Suppose that \(X\) and \(Y\) are bounded centered random variables and \(\operatorname{Var}(X)<\operatorname{Var}(Y)\). There is a centered random variable \(Z\), independent of \((X,Y)\) and having finite moments of every order, such that \(X+Z\preceq_{\mathrm{cx}}Y+Z\).
\end{lemma}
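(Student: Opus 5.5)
The plan is to read \Cref{lem:background-risk-comparison} off \citet[Theorem~2]{pomattoStrackTamuz2020}, so the work lies in matching hypotheses and translating their conclusion into the convex order used here. As I recall it, that theorem says: if \(X\) and \(Y\) are bounded random variables with \(\operatorname{E}[X]=\operatorname{E}[Y]\) and \(\operatorname{Var}(X)<\operatorname{Var}(Y)\), then there is an independent random variable \(Z\) with finite moments of all orders (in their construction, a Gaussian or a suitable mixture) such that \(X+Z\) is dominated by \(Y+Z\) in the second-order (concave/convex) stochastic order. The statement may instead be phrased with strict mean inequalities and first-order dominance, or for \(\mathbf E[X]\ge \mathbf E[Y]\). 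In that case I will specialize it to equal means, where second-order dominance between equal-mean laws is exactly the convex order.

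First, I will check that the hypotheses hold. \(X\) and \(Y\) are bounded and have equal (zero) means, and the variance inequality is strict, as the cited theorem requires. If the cited result needs \(X\) and \(Y\) to be defined on a common space with \(Z\) independent of the pair, I will pass to a product extension, as the paper permits, and take \(Z\) independent of \((X,Y)\).

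Second, if the \(Z\) supplied by the theorem is not centered, I will replace it with \(Z'=Z-\operatorname{E}[Z]\). This is legitimate because \(Z\) has a finite first moment. Shifting both sides by the same constant preserves the convex order: if \(\varphi\) is convex with linear growth, then so is \(x\mapsto\varphi(x-\operatorname{E}[Z])\). Finite moments of every order survive the shift.

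Third, I will convert the order relation. Suppose the theorem gives \(\operatorname{E}[u(X+Z)]\ge\operatorname{E}[u(Y+Z)]\) for every concave, or increasing concave, \(u\). Equal means of \(X+Z\) and \(Y+Z\), both in \(L^1\), then upgrade increasing-concave dominance to the full convex order, by the standard fact that the increasing convex order with equal means is the convex order. Taking \(\varphi=-u\) gives the defining inequality for \(\preceq_{\mathrm{cx}}\) on convex functions with linear growth, and the integrals are finite because \(X+Z\) and \(Y+Z\) are in \(L^1\).

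The main obstacle is the exact form of the cited theorem: which order it concludes, whether it requires \(\operatorname{E}[X]>\operatorname{E}[Y]\) rather than equality, and which test-function class it uses. If it only handles strict mean gaps for first-order dominance, I would fall back to its second-order version, which in Pomatto--Strack--Tamuz is what drives the mean-variance characterization, and apply it with equal means.
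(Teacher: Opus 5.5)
Your proposal is correct and follows the paper's proof: you invoke \citet[Theorem~2]{pomattoStrackTamuz2020} to obtain an independent \(\widetilde Z\) with finite moments of all orders such that \(X+\widetilde Z\) second-order stochastically dominates \(Y+\widetilde Z\), use the equality of means to turn this into \(X+\widetilde Z\preceq_{\mathrm{cx}}Y+\widetilde Z\), and recenter \(\widetilde Z\), which preserves both the comparison and the finite moments. One small point of wording: the first-paragraph phrase ``\(X+Z\) is dominated by \(Y+Z\)'' holds only in the convex-order sense, and your third step states the direction correctly as \(\operatorname{E}[u(X+Z)]\geq\operatorname{E}[u(Y+Z)]\) for concave \(u\).
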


\begin{proof}[Proof of \Cref{lem:background-risk-comparison}]
Because \(X\) and \(Y\) are centered and bounded and satisfy
\(\operatorname{Var}(X)<\operatorname{Var}(Y)\), \citet[Theorem~2]{pomattoStrackTamuz2020} provides a random variable \(\widetilde Z\), independent of the random vector \((X,Y)\) and having finite moments of every order, such that \(X+\widetilde Z\) second-order stochastically dominates \(Y+\widetilde Z\). The two sums have the same mean, so this comparison implies \(X+\widetilde Z\preceq_{\mathrm{cx}}Y+\widetilde Z\). Set \(Z\coloneqq\widetilde Z-\operatorname{E}[\widetilde Z]\). Translating both sides of a convex-order comparison by the same constant preserves the comparison, so \(X+Z\preceq_{\mathrm{cx}}Y+Z\). The variable \(Z\) is centered, remains independent of \((X,Y)\), and retains finite moments of every order because subtracting a finite constant preserves every finite absolute moment.
\end{proof}

\begin{lemma}\label{lem:bounded-calibration}
Maintain \Cref{ass:additive-convex-order}. There is a constant \(c\geq0\) such that \(R(X)=c\operatorname{Var}(X)\) for every bounded centered random variable \(X\).
\end{lemma}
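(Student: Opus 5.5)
We first show that, among bounded centered risks, a smaller variance can never receive a larger value, and then upgrade this to exact dependence on variance.

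\emph{Step 1 (weak monotonicity).} Let \(X,Y\) be bounded centered with \(\operatorname{Var}(X)<\operatorname{Var}(Y)\). \Cref{lem:background-risk-comparison} supplies a centered \(Z\), independent of \((X,Y)\), with all moments finite and \(X+Z\preceq_{\mathrm{cx}}Y+Z\). Since \(p<+\infty\), \(Z\in L^p\), so \(\mathcal L(Z)\in\mathcal P_0^p\). The sums \(X+Z\) and \(Y+Z\) are centered and in \(L^p\). Monotonicity and additivity give
\[
R(X)+R(Z)=R(X+Z)\leq R(Y+Z)=R(Y)+R(Z).
\]
Because \(R\) is real-valued, we may cancel \(R(Z)\) to get \(R(X)\leq R(Y)\).

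\emph{Step 2 (equal variance gives equal value).} Let \(\operatorname{Var}(X)=\operatorname{Var}(Y)\) and suppose \(R(X)-R(Y)=\varepsilon>0\). If \(\operatorname{Var}(Y)=0\), then \(X=Y=0\) a.s. and \(R(X)=R(Y)\) trivially. Otherwise, fix a bounded centered \(W\ne0\), for instance a symmetric \(\pm1\) coin. Let \(X_1,\dots,X_n\) and \(Y_1,\dots,Y_n\) be i.i.d.\ copies of \(X\) and \(Y\), independent of \(W\). Then \(\sum X_i\) and \(W+\sum Y_i\) are bounded centered, and
\[
\operatorname{Var}\Bigl(\sum X_i\Bigr)=n\operatorname{Var}(X)<n\operatorname{Var}(Y)+\operatorname{Var}(W)=\operatorname{Var}\Bigl(W+\sum Y_i\Bigr).
\]
Step 1 and additivity then give \(nR(X)\leq nR(Y)+R(W)\), that is, \(n\varepsilon\leq R(W)\) for every \(n\). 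This is impossible, so \(R(X)\leq R(Y)\), and by symmetry \(R(X)=R(Y)\). Hence on bounded centered laws \(R(X)=g(\operatorname{Var}(X))\) for some function \(g\colon[0,\infty)\to\mathbf R\). Step 1 shows \(g\) is nondecreasing, since every \(v\ge0\) is the variance of a scaled coin.

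\emph{Step 3 (linearity).} For \(s,t\geq0\), take independent scaled coins with variances \(s\) and \(t\). Additivity gives \(g(s+t)=g(s)+g(t)\), so \(g\) is additive on \([0,\infty)\). Being also monotone, Cauchy's equation forces \(g(v)=cv\) with \(c=g(1)\). From \(g(0)=0\) and monotonicity, \(c\geq0\).

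The main obstacle is Step 2. Step 1 alone only controls strict variance inequalities, so ties have to be broken by the amplification trick with the extra risk \(W\). The key point is that the fixed gap \(R(W)\) stays bounded while the accumulated difference \(n\varepsilon\) grows without bound.
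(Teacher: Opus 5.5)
Your proposal is correct and follows essentially the same route as the paper: use \Cref{lem:background-risk-comparison} and cancel $R(Z)$ to get monotonicity under strict variance inequalities, break ties by amplifying with $n$ independent copies plus an extra bounded risk $W$, then force linearity from additivity and monotonicity of the induced function of variance. The differences are cosmetic. Your separate zero-variance case is harmless but redundant, since the tie-breaking argument already covers it. Where you cite the standard fact that monotone additive functions are linear, the paper spells it out via rationals and a squeeze.
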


\begin{proof}[Proof of \Cref{lem:bounded-calibration}]
Suppose first that \(X\) and \(Y\) are bounded and centered, with \(\operatorname{Var}(X)<\operatorname{Var}(Y)\). By \Cref{lem:background-risk-comparison}, there is a centered \(Z\in L^p\), independent of \((X,Y)\), such that \(X+Z\preceq_{\mathrm{cx}}Y+Z\). Convex-order monotonicity and convolution additivity of \(R\) yield
\[R(X)+R(Z)=R(X+Z)\leq R(Y+Z)=R(Y)+R(Z),\]
and hence
\[
\operatorname{Var}(X)<\operatorname{Var}(Y)
\quad \Longrightarrow \quad
R(X)\leq R(Y).
\tag{1}\label{eq:bounded-variance-order}
\]

Now suppose that \(X\) and \(Y\) are bounded and centered, with \(\operatorname{Var}(X)=\operatorname{Var}(Y)\). Suppose for the sake of contradiction that \(R(X)>R(Y)\). Let \(W\) be any bounded centered random variable with positive variance. For every \(n\), take \(X_1,\ldots,X_n,Y_1,\ldots,Y_n,W\) to be mutually independent, with each \(X_i\) distributed as \(X\) and each \(Y_i\) distributed as \(Y\). Then
\[
\operatorname{Var}\left(\sum_{i=1}^nX_i\right)<\operatorname{Var}\left(\sum_{i=1}^nY_i+W\right).
\]
Applying \eqref{eq:bounded-variance-order} and then convolution additivity of \(R\) produces
\(nR(X)\leq nR(Y)+R(W)\), which is impossible for all sufficiently large \(n\).\footnote{To see this, rearrange to obtain \(n\left(R(X)-R(Y)\right)\leq R(W)\). The left-hand side tends to \(+\infty\) because \(R(X)-R(Y)>0\), whereas \(R(W)\) is fixed, a contradiction.} Thus, \(R(X)\leq R(Y)\). Interchanging \(X\) and \(Y\) yields \(R(X)=R(Y)\).

For every \(v\in[0,+\infty)\), the random variable that takes the values \(\sqrt v\) and \(-\sqrt v\) with probability \(1/2\) each is bounded, centered, and has variance \(v\). Because bounded centered risks with the same variance have the same \(R\)-value, define \(f\colon[0,+\infty)\to\mathbf R\) by \(f(v)=R(X)\) whenever \(X\) is bounded, centered, and has variance \(v\).

If \(X\) and \(Y\) are independent bounded centered random variables with variances \(v\) and \(w\), then \(X+Y\) is bounded and centered and has variance \(v+w\). And so, as \(R\) is additive under convolution,
\[
f(v+w)=R(X+Y)=R(X)+R(Y)=f(v)+f(w).
\]
\eqref{eq:bounded-variance-order} implies that \(f\) is nondecreasing. As \(f\) is additive, \(f(0)=f(0)+f(0)\), so \(f(0)=0\). Set \(c\coloneqq f(1)\). Since \(f\) is nondecreasing, \(c\geq0\). Additivity of \(f\) implies that \(f(q)=cq\) for every
\(q\in\mathbf Q\cap[0,+\infty)\).

For an arbitrary \(v\geq0\), choose nonnegative rational sequences \((q_n)\) and \((r_n)\) such that \(q_n\uparrow v\) and \(r_n\downarrow v\). Monotonicity of \(f\) implies  
\[cq_n=f(q_n)\leq f(v)\leq f(r_n)=cr_n,\]
and taking limits yields \(f(v)=cv\).\footnote{Both bounding sequences converge to \(cv\), because \(q_n\to v\) and \(r_n\to v\). The conclusion then follows from the squeeze theorem.}
\end{proof}

Maintain \Cref{ass:additive-convex-order}, and let \(c\) be the constant supplied by \Cref{lem:bounded-calibration}. Whenever \(p\geq2\), define \(D\colon\mathcal P_0^p\to\mathbf R\) by \(D(\mu)\coloneqq R(\mu)-c\operatorname{Var}(\mu)\). Since variance is additive under convolution on \(\mathcal P^2\), \(D\) is additive under convolution. Moreover, \(D(\mu)\leq R(\mu)\) because \(c\geq0\) and \(\operatorname{Var}(\mu)\geq0\).

\begin{lemma}\label{lem:residual-nonnegative}
Maintain \Cref{ass:additive-convex-order} and suppose that \(p\geq2\). Then \(D(X)\geq0\) for every centered \(X\in L^p\).
\end{lemma}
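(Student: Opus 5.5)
Fix a centered \(X\in L^p\) with \(p\geq2\). The plan is to approximate \(X\) from below in convex order by bounded centered risks whose variances converge to \(\operatorname{Var}(X)\). Then \Cref{lem:bounded-calibration} and monotonicity transfer the bounded formula into a lower bound on \(R(X)\).

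For \(M>0\), let \(A_M\coloneqq\{|X|>M\}\). Define \(X_M\coloneqq X\) on \(A_M^c\) and \(X_M\coloneqq\operatorname{E}[X\mid A_M]\) on \(A_M\). If \(\Pr(A_M)=0\), simply take \(X_M=X\). Equivalently, \(X_M=\operatorname{E}[X\mid\mathcal G_M]\), where \(\mathcal G_M\) is the \(\sigma\)-algebra generated by \(X\mathbf 1_{A_M^c}\) and the event \(A_M\).

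\emph{Properties of \(X_M\).} First, \(X_M\) is bounded. On \(A_M^c\) we have \(|X_M|\leq M\). On \(A_M\), \(X_M\) is a single finite constant, because \(X\in L^1\). Second, \(X_M\) is centered, by the tower property. Third, \(X_M\) is a conditional expectation of \(X\), so conditional Jensen gives \(\operatorname{E}[\varphi(X_M)]\leq\operatorname{E}[\varphi(X)]\) for every convex \(\varphi\) of linear growth. Hence \(X_M\preceq_{\mathrm{cx}}X\), and both laws lie in \(\mathcal P_0^p\).

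\emph{The lower bound.} Monotonicity and \Cref{lem:bounded-calibration} give
\[
c\operatorname{Var}(X_M)=R(X_M)\leq R(X).
\]

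\emph{Convergence of the variances.} It remains to show \(\operatorname{Var}(X_M)\to\operatorname{Var}(X)\) as \(M\to\infty\). Write
\[
\operatorname{E}[X_M^2]=\operatorname{E}\bigl[X^2\mathbf 1_{A_M^c}\bigr]+\Pr(A_M)\,\operatorname{E}[X\mid A_M]^2 .
\]
The first term tends to \(\operatorname{E}[X^2]\) by monotone convergence, since \(X\in L^2\). For the second term, Cauchy--Schwarz gives
\[
\Pr(A_M)\,\operatorname{E}[X\mid A_M]^2=\frac{\operatorname{E}[X\mathbf 1_{A_M}]^2}{\Pr(A_M)}\leq\operatorname{E}\bigl[X^2\mathbf 1_{A_M}\bigr]\to0 .
\]
(Alternatively, \(\operatorname{E}[X_M^2]\leq\operatorname{E}[X^2]\) by Jensen, and the first term supplies the matching lower bound.)

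Letting \(M\to\infty\) yields \(c\operatorname{Var}(X)\leq R(X)\), that is, \(D(X)\geq0\).

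The only step that needs care is the variance convergence, specifically controlling the conditional-mean atom on \(A_M\). The Cauchy--Schwarz bound handles it using only \(X\in L^2\), which is exactly where \(p\geq2\) enters.
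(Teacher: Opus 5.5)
Your proof is correct and takes essentially the same route as the paper: you replace the tail outcomes of \(X\) by their conditional mean to get a bounded centered \(X_M=\operatorname{E}[X\mid\mathcal G_M]\preceq_{\mathrm{cx}}X\), then apply \Cref{lem:bounded-calibration} with monotonicity and let \(M\to\infty\). The only cosmetic difference is in the variance step: the paper gets \(\operatorname{Var}(X_M)\to\operatorname{Var}(X)\) by showing \(X_M\to X\) in \(L^2\), whereas you control \(\operatorname{E}[X_M^2]\) directly, either through the Cauchy--Schwarz bound on the atom or through the Jensen sandwich, and both work equally well.
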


\begin{proof}[Proof of \Cref{lem:residual-nonnegative}]
Fix a centered \(X\in L^p\). If \(X\) is bounded, \Cref{lem:bounded-calibration} implies \(D(X)=0\). Suppose that \(X\) is not bounded. For each \(M>0\), let \(A_M\coloneqq\{|X|\leq M\}\), the event that \(X\) takes a value in \([-M,M]\), let \(b_M\coloneqq\operatorname{E}[X\mid A_M^c]\), and define \(B_M\coloneqq X\mathbf 1_{A_M}+b_M\mathbf 1_{A_M^c}\) and \(\mathcal G_M\coloneqq\sigma(A_M,X\mathbf 1_{A_M})\). The sigma-field \(\mathcal G_M\) reveals \(X\) on \(A_M\) and only whether \(A_M^c\) occurs on its complement. Hence, \(\operatorname{E}[X\mid\mathcal G_M]=X\) on \(A_M\) and \(\operatorname{E}[X\mid\mathcal G_M]=b_M\) on \(A_M^c\), so \(B_M=\operatorname{E}[X\mid\mathcal G_M]\). For every convex function \(\varphi\colon\mathbf R\to\mathbf R\) with at most linear growth, conditional Jensen's inequality produces
\[
\varphi(B_M)
=
\varphi\left(\operatorname{E}[X\mid\mathcal G_M]\right)
\leq
\operatorname{E}[\varphi(X)\mid\mathcal G_M].
\]
Taking expectations yields \(\operatorname{E}[\varphi(B_M)]\leq\operatorname{E}[\varphi(X)]\), and, hence, \(B_M\preceq_{\mathrm{cx}}X\).

The variable \(B_M\) is bounded and centered. Moreover,
\[\operatorname{E}\left[\left(X-B_M\right)^2\right] = \operatorname{E}\left[\left(X-b_M\right)^2\mathbf 1_{A_M^c}\right] =
\operatorname{E}\left[X^2\mathbf 1_{A_M^c}\right]
-b_M^2\operatorname{P}(A_M^c) 
\leq \operatorname{E}\left[X^2\mathbf 1_{A_M^c}\right] \rightarrow0,
\]
where the equality uses \(b_M\operatorname{P}(A_M^c)=\operatorname{E}[X\mathbf 1_{A_M^c}]\), and the convergence follows from the dominated convergence theorem because \(X\in L^2\). Thus \(B_M\to X\) in \(L^2\), which implies \(\operatorname{Var}(B_M)\to\operatorname{Var}(X)\). 

Because \(R\) is monotone in convex order and \(B_M\preceq_{\mathrm{cx}}X\), \Cref{lem:bounded-calibration} implies
\[
R(X)\geq R(B_M)=c\operatorname{Var}(B_M).
\]
Letting \(M\to+\infty\) and using
\(\operatorname{Var}(B_M)\to\operatorname{Var}(X)\) yields
\(R(X)\geq c\operatorname{Var}(X)\), and so \(D(X)\geq0\).
\end{proof}

\begin{lemma}\label{lem:tail-compression}
Maintain \Cref{ass:additive-convex-order}, and let \(c\) be supplied by \Cref{lem:bounded-calibration}. If \(c=0\), then for every centered \(X\in L^p\) and every \(\varepsilon>0\), there is a centered \(Y\in L^p\) such that \(\lVert Y\rVert_p<\varepsilon\) and \(R(Y)\geq R(X)\). If \(p\geq2\), then for every centered \(X\in L^p\) and every \(\varepsilon,\eta>0\), there is a centered \(Y\in L^p\) such that \(\lVert Y\rVert_p<\varepsilon\) and \(D(Y)\geq D(X)-\eta\).
\end{lemma}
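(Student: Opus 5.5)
We follow the construction sketched in the proof-intuition section. Fix a centered \(X\in L^p\) and a threshold \(M>0\). Let \(A_M\coloneqq\{|X|\leq M\}\), set \(a_M\coloneqq\operatorname{E}[X\mid A_M]\) (defined once \(\operatorname{P}(A_M)>0\)), and define the compressed risk
\[
Y_M\coloneqq a_M\mathbf 1_{A_M}+X\mathbf 1_{A_M^c}.
\]
Then \(Y_M\) is centered. We have \(\operatorname{E}[|Y_M|^p]\leq|a_M|^p+\operatorname{E}[|X|^p\mathbf 1_{A_M^c}]\). Since \(\operatorname{E}[X\mathbf 1_{A_M}]=-\operatorname{E}[X\mathbf 1_{A_M^c}]\to0\) and \(\operatorname{P}(A_M)\to1\), we get \(a_M\to0\). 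The tail term vanishes by dominated convergence. Hence \(\lVert Y_M\rVert_p\to0\), and we will take \(Y\coloneqq Y_M\) for \(M\) large.

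To compare \(Y_M\) with \(X\), let \(U\) be independent of \(X\) and distributed as the conditional law of \(X-a_M\) given \(A_M\). So \(U\) is bounded (by \(2M\)) and centered. Take \(U\) independent of everything else, and set
\[
\widetilde X\coloneqq Y_M+U\mathbf 1_{A_M}=\mathbf 1_{A_M}(a_M+U)+\mathbf 1_{A_M^c}X.
\]
Then \(\widetilde X\) has the same law as \(X\), because on \(A_M\) the variable \(a_M+U\) has the conditional law of \(X\) given \(A_M\), independently of the event. This is still not an independent sum. However, \(Y_M+U\), with \(U\) independent of \(Y_M\), satisfies
\[
\widetilde X=\operatorname{E}\left[Y_M+U\mid Y_M,\,U\mathbf 1_{A_M},\,\mathbf 1_{A_M}\right],
\]
since conditionally on \(A_M^c\) the variable \(U\) is independent and centered. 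I would verify this carefully: \(U\mathbf 1_{A_M}\) together with \(\mathbf 1_{A_M}\) reveals \(U\) on \(A_M\) and nothing on \(A_M^c\), where \(\operatorname{E}[U]=0\).

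Conditional Jensen, exactly as in \Cref{lem:residual-nonnegative}, then gives \(X\preceq_{\mathrm{cx}}Y_M+U\). Monotonicity, additivity, and \Cref{lem:bounded-calibration} yield
\[
R(X)\leq R(Y_M)+R(U)=R(Y_M)+c\operatorname{Var}(U).
\]
When \(c=0\), this is \(R(Y_M)\geq R(X)\), which is the first claim.

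When \(p\geq2\), subtract \(c\operatorname{Var}(X)\) from both sides:
\[
D(X)\leq D(Y_M)+c\left(\operatorname{Var}(Y_M)+\operatorname{Var}(U)-\operatorname{Var}(X)\right).
\]
It remains to show that the bracket tends to \(0\). Write \(\operatorname{Var}(U)=\operatorname{Var}(X\mid A_M)\to\operatorname{Var}(X)\), using \(X\in L^2\) and dominated convergence. Also \(\operatorname{Var}(Y_M)=a_M^2\operatorname{P}(A_M)+\operatorname{E}[X^2\mathbf 1_{A_M^c}]\to0\). So the error is at most \(\eta\) for large \(M\). Choosing \(M\) large enough for both the norm bound and the error bound finishes the proof.

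The main obstacle is the convex-order step \(X\preceq_{\mathrm{cx}}Y_M+U\). Independence of \(U\) from \(Y_M\) is needed for additivity, while \(X\) only uses \(U\) on \(A_M\). The conditional-expectation representation resolves this, and I would check it carefully. An alternative is a direct check with convex \(\varphi\), applying Jensen to \(\operatorname{E}[\varphi(x+U)]\geq\varphi(x)\) on \(A_M^c\).
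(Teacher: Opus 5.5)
Your proposal is correct and follows essentially the same route as the paper: you use the same compressed risk $Y_M$, the same independent bounded risk $U$ (the paper's $W_M$), the comparison $X\preceq_{\mathrm{cx}}Y_M+U$, and the resulting bound $R(X)\leq R(Y_M)+c\operatorname{Var}(U)$. The differences are cosmetic. You obtain the convex-order step from an explicit conditional-expectation coupling $\widetilde X=\operatorname{E}[Y_M+U\mid Y_M,U\mathbf 1_{A_M},\mathbf 1_{A_M}]$ rather than the paper's direct Jensen computation; this also lets you treat bounded $X$ without a separate case. You show the variance error vanishes by taking limits of $\operatorname{Var}(U)$ and $\operatorname{Var}(Y_M)$ separately, whereas the paper uses the exact identity $\operatorname{Var}(X)-\operatorname{Var}(Y_M)=\alpha_Mv_M$, which makes the error exactly $c\beta_Mv_M$.
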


\begin{proof}[Proof of \Cref{lem:tail-compression}]
If \(X\) is bounded, \Cref{lem:bounded-calibration} implies \(R(X)=c\operatorname{Var}(X)\). When \(c=0\), choose \(Y=0\). When \(p\geq2\), choose \(Y=0\) as well, since \(D(X)=0\). Suppose henceforth that \(X\) is not bounded.

Since \(X\) is finite almost surely,
\(\operatorname{P}(|X|\leq M)\to1\) as \(M\to+\infty\). Choose \(M_0>0\) such that
\(\operatorname{P}(|X|\leq M)>0\) for every \(M\geq M_0\). Because \(X\) is not bounded, \(\operatorname{P}(|X|>M)>0\) for every \(M>0\). For every \(M\geq M_0\), define
\(A_M\coloneqq\{|X|\leq M\}\), \(\alpha_M\coloneqq\operatorname{P}(A_M)\), and \(\beta_M\coloneqq\operatorname{P}(A_M^c)=1-\alpha_M\). Define also \(m_M\coloneqq\operatorname{E}[X\mid A_M]\) and \(v_M\coloneqq\operatorname{Var}(X\mid A_M)\).

Define \(Y_M\coloneqq m_M\mathbf 1_{A_M}+X\mathbf 1_{A_M^c}\), which replaces the values of \(X\) on \(A_M\) by their conditional mean \(m_M\) and leaves \(X\) unchanged on \(A_M^c\). As \(\alpha_Mm_M=\operatorname{E}[X\mathbf 1_{A_M}]\), we have \(\operatorname{E}[Y_M]
=
\alpha_Mm_M+\operatorname{E}[X\mathbf 1_{A_M^c}]
=
\operatorname{E}[X]
=
0\), and so \(Y_M\) is centered and belongs to \(L^p\).

Let \(W_M\) be independent of \(X\), with law equal to the conditional law of \(X-m_M\) given \(A_M\). Thus, \(\operatorname{E}[W_M]
=
\operatorname{E}[X-m_M\mid A_M]
=
0\), and \(\operatorname{Var}(W_M)=v_M\). The variable \(W_M\) is bounded because \(|X|\leq M\) on \(A_M\). Moreover, \(Y_M\) is a function of \(X\), so the independence of \(W_M\) and \(X\) implies the independence of \(W_M\) and \(Y_M\).

We next compare \(X\) with \(Y_M+W_M\). Let \(\widetilde X_M\) be independent of \(W_M\) and satisfy \(\operatorname{P}(\widetilde X_M\in B) = \operatorname{P}(X\in B\mid A_M^c)\) for every Borel \(B\subseteq\mathbf R\). For every convex function \(\varphi\colon\mathbf R\to\mathbf R\) with at most linear growth, conditional Jensen's inequality implies
\[
\begin{aligned}
\operatorname{E}\left[\varphi(Y_M+W_M)\right]
&=\alpha_M\operatorname{E}\left[\varphi(X)\mid A_M\right]
+\beta_M\operatorname{E}\left[\varphi(\widetilde X_M+W_M)\right]\\
&\geq\alpha_M\operatorname{E}\left[\varphi(X)\mid A_M\right]
+\beta_M\operatorname{E}\left[\varphi(\widetilde X_M)\right] =\operatorname{E}\left[\varphi(X)\right].
\end{aligned}
\]
Moreover, as both variables are centered, \(X\preceq_{\mathrm{cx}}Y_M+W_M\); whence we deduce \(R(X)\leq R(Y_M+W_M)\), because \(R\) is monotone in convex order. Because \(Y_M\) and \(W_M\) are independent, convolution additivity of \(R\) and \Cref{lem:bounded-calibration} then yield
\[
R(X)
\leq
R(Y_M+W_M)
=
R(Y_M)+R(W_M)
=
R(Y_M)+cv_M.
\tag{2}\label{eq:compression-risk-bound}
\]

\begin{claim}
    The variables \(Y_M\) converge to zero in \(L^p\).
\end{claim}
\begin{proof}
We have \(\lVert Y_M\rVert_p^p
=
\alpha_M|m_M|^p
+
\operatorname{E}\left[|X|^p\mathbf 1_{A_M^c}\right]\). Because \(X\in L^p\) and \(\mathbf 1_{A_M^c}\to0\) almost surely, the dominated convergence theorem implies
\(\operatorname{E}\left[|X|^p\mathbf 1_{A_M^c}\right] \rightarrow 0\).
Similarly, \(X\in L^1\), so
\(\operatorname{E}\left[X\mathbf 1_{A_M^c}\right]\rightarrow0\). Since \(X\) is centered,
\(\alpha_Mm_M
=
\operatorname{E}\left[X\mathbf 1_{A_M}\right]
=
-\operatorname{E}\left[X\mathbf 1_{A_M^c}\right]\). Because \(\alpha_M\to1\), \(m_M\to0\), and so
\(\alpha_M|m_M|^p\to0\). Consequently,
\(Y_M \rightarrow 0\) in \(L^p\).
\end{proof}

If \(c=0\), \eqref{eq:compression-risk-bound} implies \(R(Y_M)\geq R(X)\), while the fact \(Y_M \rightarrow 0\) in \(L^p\) permits \(M\) to be chosen so that \(\lVert Y_M\rVert_p<\varepsilon\).

Suppose now that \(p\geq2\). Then \(X\in L^2\). Because \(X\) and \(Y_M\) are centered and agree on \(A_M^c\),
\[\operatorname{Var}(X)-\operatorname{Var}(Y_M) =
\operatorname{E}[X^2\mathbf 1_{A_M}]
-\alpha_Mm_M^2 = \alpha_M\left(
\operatorname{E}[X^2\mid A_M]-m_M^2
\right) = \alpha_Mv_M.
\tag{3}\label{eq:compression-variance-difference}
\]
From \eqref{eq:compression-risk-bound}, \(R(Y_M)-R(X)\geq-cv_M\). Together with
\eqref{eq:compression-variance-difference}, this implies
\[D(Y_M)-D(X) =
R(Y_M)-R(X)
+c\left(\operatorname{Var}(X)-\operatorname{Var}(Y_M)\right) \geq -cv_M+c\alpha_Mv_M = -c\beta_Mv_M.
\]
Equivalently,
\[
D(Y_M)\geq D(X)-c\beta_Mv_M.
\tag{4}\label{eq:compression-residual-bound}
\]
Moreover,
\[
v_M
=
\operatorname{E}[X^2\mid A_M]-m_M^2
\leq
\operatorname{E}[X^2\mid A_M]
=
\frac{\operatorname{E}[X^2\mathbf 1_{A_M}]}{\alpha_M}
\leq
\frac{\operatorname{E}[X^2]}{\alpha_M},
\]
and so
\[
0\leq\beta_Mv_M
\leq
\frac{\beta_M}{\alpha_M}\operatorname{E}[X^2]
\longrightarrow0,
\]
because \(\alpha_M\to1\) and \(\beta_M\to0\).

\eqref{eq:compression-residual-bound} and the fact that \(Y_M \rightarrow 0\) in \(L^p\) allow us to choose \(M\) large enough so that \(\lVert Y_M\rVert_p<\varepsilon\) and \(D(Y_M)\geq D(X)-\eta\). Set \(Y\coloneqq Y_M\).
\end{proof}

The next lemma shows that a nonnegative statistic bounded above by \(R\) must vanish if it is additive under convolution and its value can be preserved, up to an arbitrarily small loss, on centered random variables with arbitrarily small \(L^p\)-norm.

\begin{lemma}\label{lem:stacking}
Maintain \Cref{ass:additive-convex-order}. Suppose that \(Q\colon\mathcal P_0^p\to[0,+\infty)\) is additive under convolution and satisfies \(Q(\mu)\leq R(\mu)\) for every \(\mu\in\mathcal P_0^p\). Suppose also that, for every centered \(X\in L^p\) and every \(\varepsilon,\eta>0\), there is a centered \(Y\in L^p\) such that \(\lVert Y\rVert_p<\varepsilon\) and \(Q(Y)\geq Q(X)-\eta\). Then \(Q(\mu)=0\) for every \(\mu\in\mathcal P_0^p\).
\end{lemma}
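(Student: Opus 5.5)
We argue by contradiction and follow the stacking idea from the intuition section. Suppose \(Q(X)=q>0\) for some centered \(X\in L^p\). Using the hypothesis with \(\eta=q/2\) and a sequence of tolerances \(\varepsilon_k\coloneqq2^{-k}\), we obtain centered \(Y_k\in L^p\) with \(\lVert Y_k\rVert_p<2^{-k}\) and \(Q(Y_k)\geq q/2\) for every \(k\geq1\). On a product extension we realize mutually independent copies of these \(Y_k\), keeping their laws. Since \(\sum_k\lVert Y_k\rVert_p<+\infty\), the partial sums \(S_N\coloneqq\sum_{k=1}^NY_k\) form a Cauchy sequence in \(L^p\). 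They therefore converge in \(L^p\), and almost surely along a subsequence, to some \(S\in L^p\). The limit \(S\) is centered because \(L^p\) convergence implies \(L^1\) convergence. Hence \(\mathcal L(S)\in\mathcal P_0^p\) and \(R(S)\) is a finite real number.

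Next we bound \(R(S_N)\) from below. Additivity of \(Q\) under convolution gives \(Q(S_N)=\sum_{k=1}^NQ(Y_k)\geq Nq/2\). Because \(Q\leq R\), we also get \(R(S_N)\geq Nq/2\).

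We then show \(S_N\preceq_{\mathrm{cx}}S\). Write \(T_N\coloneqq S-S_N\). This is the \(L^p\)-limit of \(\sum_{k=N+1}^{M}Y_k\) as \(M\to+\infty\), so it is measurable with respect to \(\sigma(Y_k\colon k>N)\). It is therefore independent of \(S_N\), and it is centered. For convex \(\varphi\) with at most linear growth, we condition on \(S_N\) and apply Jensen's inequality:
\[
\operatorname{E}[\varphi(S_N+T_N)]=\operatorname{E}\bigl[\operatorname{E}[\varphi(S_N+T_N)\mid S_N]\bigr]\geq\operatorname{E}[\varphi(S_N+\operatorname{E}[T_N])]=\operatorname{E}[\varphi(S_N)].
\]
The conditional expectation is computed using independence. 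Integrability holds since \(S_N,T_N\in L^1\) and \(\varphi\) has linear growth. Monotonicity of \(R\) in convex order then yields \(R(S)\geq R(S_N)\geq Nq/2\) for every \(N\). This contradicts the finiteness of \(R(S)\), so \(Q\equiv0\).

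The main technical point is to justify that \(T_N\) is independent of \(S_N\). Measurability of an \(L^p\)-limit with respect to the tail \(\sigma\)-field \(\sigma(Y_k\colon k>N)\) handles this, after passing to an almost surely convergent subsequence and taking a measurable version. The rest is routine. Note that nonnegativity of \(Q\) is not strictly needed for the argument beyond ensuring \(q>0\) in the contradiction hypothesis. Nonnegativity is what turns "not identically zero" into "some value \(q>0\)", which is the case we treat.
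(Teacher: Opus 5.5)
Your proposal is correct and follows the paper's proof. It uses the same independent \(Y_k\) with \(\lVert Y_k\rVert_p<2^{-k}\) and \(Q(Y_k)\geq q/2\), the same \(L^p\)-limit \(S\), and the same contradiction \(R(S)\geq R(S_N)\geq Q(S_N)\geq Nq/2\). The only difference is cosmetic: you obtain \(S_N\preceq_{\mathrm{cx}}S\) from the independence of the centered remainder \(S-S_N\) plus Jensen's inequality, whereas the paper identifies \(S_N=\operatorname{E}[S\mid\sigma(Y_1,\ldots,Y_N)]\) via \(L^1\)-continuity of conditional expectation, and both justifications are valid.
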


\begin{proof}[Proof of \Cref{lem:stacking}]
Suppose, toward a contradiction, that \(Q(X)=d>0\) for some centered \(X\in L^p\). Applying the hypothesis with \(\varepsilon=2^{-n}\) and \(\eta=d/2\), for each \(n\geq1\) choose a centered \(Y_n\in L^p\) such that
\[
\lVert Y_n\rVert_p<2^{-n}
\qquad\text{and}\qquad
Q(Y_n)\geq\frac d2.
\]
Take the variables \((Y_n)_{n\geq1}\) to be mutually independent, and let \(S_N\coloneqq\sum_{n=1}^NY_n\). Minkowski's inequality implies that \((S_N)\) is Cauchy in \(L^p\).\footnote{For \(m>N\), Minkowski's inequality produces \(\lVert S_m-S_N\rVert_p\leq\sum_{n=N+1}^m\lVert Y_n\rVert_p<\sum_{n=N+1}^{\infty}2^{-n}\), and the final tail converges to zero as \(N\to+\infty\).} Let \(Y\in L^p\) denote its limit. Since \(\lVert U\rVert_1\leq\lVert U\rVert_p\) on a probability space, \(S_N\to Y\) in \(L^1\). Each \(S_N\) is centered, so
\(\left|\operatorname{E}[Y]\right|
=
\left|\operatorname{E}[Y-S_N]\right|
\leq
\lVert Y-S_N\rVert_1
\rightarrow0\), and so \(Y\) is centered.

Fix \(N\), and let \(\mathcal F_N\coloneqq\sigma(Y_1,\ldots,Y_N)\). For every \(m>N\), the variables \(Y_{N+1},\ldots,Y_m\) are independent of \(\mathcal F_N\) and have mean zero. Therefore,
\[\operatorname{E}\left[S_m-S_N\mid\mathcal F_N\right] =  \sum_{n=N+1}^m\operatorname{E}\left[Y_n\mid\mathcal F_N\right] = \sum_{n=N+1}^m\operatorname{E}[Y_n] = 0.\]

Because \(S_m-S_N\to Y-S_N\) in \(L^1\), the \(L^1\)-contraction property of conditional expectation\footnote{For every integrable \(U\), \(\lVert\operatorname{E}[U\mid\mathcal F_N]\rVert_1\leq\lVert U\rVert_1\). Consequently, if \(U_m\to U\) in \(L^1\), then \(\operatorname{E}[U_m\mid\mathcal F_N]\to\operatorname{E}[U\mid\mathcal F_N]\) in \(L^1\).} yields
\(\operatorname{E}\left[Y-S_N\mid\mathcal F_N\right]=0\). Thus, \(S_N=\operatorname{E}[Y\mid\mathcal F_N]\), so conditional Jensen's inequality delivers \(S_N\preceq_{\mathrm{cx}}Y\). 

Because \(R\) is monotone in convex order, \(S_N\preceq_{\mathrm{cx}}Y\) implies \(R(Y)\geq R(S_N)\). The bound \(Q\leq R\) implies \(R(S_N)\geq Q(S_N)\). Because \(Y_1,\ldots,Y_N\) are independent, convolution additivity of \(Q\) yields \(Q(S_N)=\sum_{n=1}^NQ(Y_n)\). Consequently,
\[
R(Y)\geq R(S_N)\geq Q(S_N)
=\sum_{n=1}^NQ(Y_n)
\geq\frac{Nd}{2}.
\]
This inequality cannot hold for every \(N\) because \(R(Y)\) is a finite real number. Accordingly, \(Q(X)\) cannot be positive for any centered \(X\in L^p\). Since \(Q\) is nonnegative, \(Q(\mu)=0\) for every \(\mu\in\mathcal P_0^p\).
\end{proof}

\begin{lemma}\label{lem:subquadratic-calibration}
Maintain \Cref{ass:additive-convex-order} and suppose that \(1\leq p<2\). Then the constant \(c\) supplied by \Cref{lem:bounded-calibration} equals zero.
\end{lemma}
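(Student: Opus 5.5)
We argue by contradiction, following the outline given in the proof-intuition section. Suppose \(c>0\), and produce a centered \(X\in L^p\) with \(\operatorname{E}[X^2]=+\infty\).

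\emph{Step 1: a centered law in \(\mathcal P_0^p\) with infinite variance.} Because \(p<2\), we can fix any \(\alpha\) with \(p<\alpha<2\). Let \(X\) be symmetric with \(\operatorname{P}(|X|>t)=t^{-\alpha}\) for \(t\geq1\), splitting the mass equally between \(\pm\). Symmetry makes \(X\) centered. We have \(\operatorname{E}[|X|^p]<+\infty\) because \(p<\alpha\), and \(\operatorname{E}[X^2]=+\infty\) because \(\alpha<2\).

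\emph{Step 2: bounded truncations below \(X\) in convex order.} For each \(M>0\), form \(B_M\coloneqq X\mathbf 1_{A_M}+b_M\mathbf 1_{A_M^c}\), where \(A_M=\{|X|\leq M\}\) and \(b_M=\operatorname{E}[X\mid A_M^c]\). This is exactly the construction in the proof of \Cref{lem:residual-nonnegative}. That argument shows \(B_M=\operatorname{E}[X\mid\mathcal G_M]\), and the conditional Jensen step there uses only \(X\in L^1\). So \(B_M\) is bounded, centered, and satisfies \(B_M\preceq_{\mathrm{cx}}X\). Here \(b_M=0\) by symmetry, so \(B_M=X\mathbf 1_{A_M}\) and
\[
\operatorname{Var}(B_M)=\operatorname{E}[X^2\mathbf 1_{A_M}]\uparrow\operatorname{E}[X^2]=+\infty
\]
by monotone convergence. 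For a general infinite-variance \(X\) one would instead use \(\operatorname{Var}(B_M)\geq\operatorname{E}[X^2\mathbf 1_{A_M}]-b_M^2\). Since \(b_M\) need not stay bounded, the symmetric choice is cleaner.

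\emph{Step 3: contradiction.} By monotonicity in convex order and \Cref{lem:bounded-calibration},
\[
R(X)\geq R(B_M)=c\operatorname{Var}(B_M)\to+\infty .
\]
This contradicts the fact that \(R(X)\) is a real number, so \(c=0\), since \Cref{lem:bounded-calibration} already gives \(c\geq0\).

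The only mildly delicate point is Step 2: we must confirm that the convex-order comparison from \Cref{lem:residual-nonnegative} does not need \(X\in L^2\), and that the truncated variances diverge. Choosing a symmetric \(X\) removes the issue with the conditional-mean term \(b_M\). The rest is immediate from earlier results.
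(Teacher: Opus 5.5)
Your proof is correct and is essentially the paper's argument. The paper uses a discrete symmetric variable \(H\) with \(\operatorname{P}(H=\pm2^n)\propto2^{-rn}\), \(r\in(p,2)\), in place of your symmetric Pareto law, but it likewise writes the symmetric truncation as a conditional expectation to get \(H_N\preceq_{\mathrm{cx}}H\), lets \(\operatorname{Var}(H_N)\uparrow+\infty\) by monotone convergence, and contradicts the finiteness of \(R(H)\). One small note: your aside about general \(X\) is overly cautious, because \(B_M\) is centered, so \(\operatorname{Var}(B_M)=\operatorname{E}[X^2\mathbf 1_{A_M}]+b_M^2\operatorname{P}(A_M^c)\geq\operatorname{E}[X^2\mathbf 1_{A_M}]\) and symmetry is not actually needed.
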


\begin{proof}[Proof of \Cref{lem:subquadratic-calibration}]
Choose \(r\in(p,2)\), and let \(\gamma\coloneqq\left(\sum_{n=1}^{\infty}2^{-rn}\right)^{-1}\). Define a symmetric random variable \(H\) by
\[
\Pr(H=2^n)=\Pr(H=-2^n)=\frac{\gamma}{2}2^{-rn},
\]
for every \(n\geq1\). Then
\[
\operatorname{E}[|H|^p]
=\gamma\sum_{n=1}^{\infty}2^{(p-r)n}
<+\infty
\qquad \text{and} \qquad
\operatorname{E}[H^2]
=\gamma\sum_{n=1}^{\infty}2^{(2-r)n}
=+\infty.
\]
Thus, \(H\in L^p\) and \(H\) is centered, but \(H\notin L^2\).

For \(N\geq1\), define \(H_N \coloneqq H\mathbf 1_{\{|H|\leq2^N\}}\) and \(\mathcal H_N \coloneqq\sigma\left(\{|H|\leq2^N\},H\mathbf 1_{\{|H|\leq2^N\}}\right)\). Symmetry of \(H\) implies \(\operatorname{E}[H\mid |H|>2^N]=0\), so \(H_N=\operatorname{E}[H\mid\mathcal H_N]\). Consequently, \(H_N\preceq_{\mathrm{cx}}H\). The variable \(H_N\) is bounded and centered, and the monotone convergence theorem implies \(\operatorname{Var}(H_N)\to+\infty\) (via \(H_N^2
=
H^2\mathbf 1_{\{|H|\leq2^N\}}
\uparrow H^2\) plus \(\operatorname{Var}(H_N)
=
\operatorname{E}[H_N^2]\)). 

Thus, \(R(H)\geq R(H_N)=c\operatorname{Var}(H_N)\), and  since \(R(H)\) is finite, \(c=0\).\footnote{If \(c>0\), then \(c\operatorname{Var}(H_N)\to+\infty\), contradicting the fact that \(R(H)\) is a finite real number. Since \(c\geq0\), we conclude that \(c=0\).}
\end{proof}

\begin{proof}[Proof of \Cref{thm:second-moment-dichotomy}]
Suppose first that \(1\leq p<2\). By \Cref{lem:subquadratic-calibration}, \(c=0\). By \Cref{lem:bounded-calibration}, \(R(0)=0\). For every centered \(X\in L^p\), \(0\preceq_{\mathrm{cx}}X\), so \(R(X)\geq0\). Set \(Q\coloneqq R\). Then \(Q\) is nonnegative and additive under convolution, and \(Q\leq R\) trivially. The \(c=0\) conclusion of \Cref{lem:tail-compression} produces the small-\(L^p\)-norm approximation required by \Cref{lem:stacking}. Therefore, \(R=0\).

Suppose now that \(2\leq p<+\infty\). The statistic \(D\) is additive under convolution and satisfies \(D\leq R\), while by \Cref{lem:residual-nonnegative}, \(D\geq0\). The \(p\geq2\) conclusion of \Cref{lem:tail-compression} produces the small-\(L^p\)-norm approximation required by \Cref{lem:stacking}. Apply that lemma with \(Q\coloneqq D\) to get \(D=0\). Therefore,
\(R(\mu)=c\operatorname{Var}(\mu)\) for every
\(\mu\in\mathcal P_0^p\).

Conversely, the zero map is additive under convolution and monotone in convex order. When \(p\geq2\), variance has both properties, so every map
\(\mu\mapsto c\operatorname{Var}(\mu)\) with \(c\geq0\) satisfies
\Cref{ass:additive-convex-order}.
\end{proof}

\section{The full domain and certainty equivalents}

\begin{proof}[Proof of \Cref{cor:full-classification}]
Define \(a\colon\mathbf R\to\mathbf R\) by \(a(x)\coloneqq\Phi(\delta_x)\). Since \(\delta_x*\delta_y=\delta_{x+y}\), additivity implies \(a(x+y)=a(x)+a(y)\) for every \(x,y\in\mathbf R\). The restriction of \(\Phi\) to \(\mathcal P_0^p\) is real-valued, additive, and monotone. Thus, \Cref{thm:second-moment-dichotomy} applies to that restriction.

Fix \(\mu\in\mathcal P^p\), let \(m_\mu\coloneqq m(\mu)\), and define \(\overline\mu\coloneqq\mu*\delta_{-m_\mu}\). Then \(\overline\mu\in\mathcal P_0^p\), \(\mu=\overline\mu*\delta_{m_\mu}\), and \(\operatorname{Var}(\overline\mu)=\operatorname{Var}(\mu)\) whenever \(p\geq2\). Hence, \(\Phi(\mu)=\Phi(\overline\mu)+a(m_\mu)\). If \(p<2\), \Cref{thm:second-moment-dichotomy} implies \(\Phi(\overline\mu)=0\); and if \(p\geq2\), it implies \(\Phi(\overline\mu)=c\operatorname{Var}(\mu)\) for some \(c\geq0\). Combining these two cases delivers the stated formulas.

Conversely, means add under convolution, and variances add across independent risks. Convex order preserves the mean and weakly increases variance, so every map in \Cref{cor:full-classification} has the stated properties.
\end{proof}

\begin{proof}[Proof of \Cref{cor:certainty-equivalent}]
Define the risk premium \(\rho\colon\mathcal P^p\to\mathbf R\) by \(\rho(\mu)\coloneqq m(\mu)-C(\mu)\). The map \(\rho\) is additive under convolution. If \(\mu\preceq_{\mathrm{cx}}\nu\), then \(m(\mu)=m(\nu)\) and \(C(\mu)\geq C(\nu)\), so \(\rho(\mu)\leq\rho(\nu)\). Moreover, \(\rho(\delta_x)=0\) for every \(x\in\mathbf R\). Applying \Cref{cor:full-classification} to \(\rho\), therefore, eliminates the additive mean term. If \(p<2\), then \(\rho(\mu)=0\) and \(C(\mu)=m(\mu)\). If \(p\geq2\), then \(\rho(\mu)=c\operatorname{Var}(\mu)\) for some \(c\geq0\), and hence \(C(\mu)=m(\mu)-c\operatorname{Var}(\mu)\).

Conversely, expected value satisfies the hypotheses when \(1\leq p<2\). When \(2\leq p<+\infty\), every mean-variance certainty equivalent with \(c\geq0\) satisfies certainty, independent additivity, and aversion to mean-preserving spreads.
\end{proof}

\bibliographystyle{plainnat}
\bibliography{references}

\end{document}